\newcommand{\x}{\times}
\newcommand{\F}{\mathbb F}
\newcommand{\OO}{\mathcal{O}}
\newcommand{\bv}{\mathbf{b}}
\newcommand{\gv}{\mathbf{g}}
\newcommand{\yv}{\mathbf{y}}
\newcommand{\Cc}{\mathcal{C}}
\newcommand{\R}{\mathbb{R}}
\newcommand{\C}{\mathbb{C}}
\newcommand{\Q}{\mathbb{Q}}
\newcommand{\Z}{\mathbb{Z}}
\newcommand{\qa}[3]{({#1},{#2})_{#3}}
\newcommand{\bm}[4]{\begin{bmatrix}{#1} &{#2}\\{#3} &{#4}\end{bmatrix}}
\newtheorem{lemma}{Lemma}
\newtheorem{remark}[lemma]{Remark}
\newcommand{\alphatau}[1] {\begin{bmatrix} X &0 & 0 \\ 0 &{\tau(X)} & 0 \\ 0 &0 & {\tau^2(X)} \end{bmatrix}}
\newtheorem{definition}{Definition}[section]
\newtheorem{thm}{Theorem}[section]
\newtheorem{proposition}[thm]{Proposition}
\newcommand{\tr}{{\rm Tr}}
\newcommand{\G}{{\Gamma}}
\renewcommand{\G}{{\mathcal J}}
\begin{document}
 
\title{Algebraic Fast-Decodable Relay Codes for Distributed Communications}

\author{\IEEEauthorblockN{Camilla Hollanti, \emph{Member, IEEE}} 
\IEEEauthorblockA{Department of Math. and Syst. An.\\  P.O. Box 11100\\FI-00076 Aalto University\\ Finland\\ camilla.hollanti@aalto.fi\\}
\and
\IEEEauthorblockN{ Nadya Markin}  
\IEEEauthorblockA{ School of Phys. and Math. Sciences \\
Nanyang Technological University\\
21 Nanyang Link \\
Singapore 637371 \\
nadyaomarkin@gmail.com \\
}
}

\maketitle

\begin{abstract}
In this paper, fast-decodable lattice code constructions are designed for the nonorthogonal amplify-and-forward (NAF) multiple-input multiple-output (MIMO) channel. The constructions are based on different types of algebraic structures, \emph{e.g.} quaternion division algebras. When satisfying certain properties, these algebras provide us with codes whose structure naturally reduces the decoding complexity.  The complexity can be further reduced by shortening the block length, \emph{i.e.}, by considering rectangular codes called less than minimum delay (LMD) codes. 
\end{abstract}


\section{Introduction}

The quality of wireless long distance communications can be significantly improved by using cooperative diversity techniques. Cooperating relays can be positioned between the source station and the destination to aid the transmission by either amplifying and forwarding (AF) or decoding and forwarding (DF) the signal. Spatially separated terminals will allow an increment in the diversity in a distributed manner. Depending on the application, a one-hop or multi-hop transmission is called for. Here, we consider multi-hop distributed space-time codes employing a half-duplex NAF protocol \cite{nabar_naf}, \cite{azarian_naf}.  It is known \cite{azarian_naf} that the NAF protocol outperforms all other AF protocols since, as opposed to orthogonal protocols, it can keep transmitting also during the transmission of the relays. In addition, the AF protocols are less complex than the DF protocols. This type of low cost relay systems are called for in \emph{e.g.} digital video broadcasting (DVB) \cite{DVB}.

In \cite{belfi_naf} and \cite{asykonstru}, Yang \emph{et al.} and Hollanti \emph{et al.} proposed block-diagonal space-time code constructions for the asymmetric MIMO channel with or without relays. The constructions arise from cyclic division algebras constructed over a higher degree center. A nonvanishing determinant  (NVD) is then achieved by forming a block-diagonal matrix consisting of the left regular representation of the algebra and its Galois conjugates from the center to the base field. It was also shown \cite{belfi_naf} that a block-diagonal structure together with the NVD property is enough to achieve the diversity-multiplexing gain tradeoff (DMT) also in the asymmetric case, where the number of transmit antennas is strictly bigger than the number of receive antennas, and hence the corresponding lattice is not full. Motivated by this and the urge for complexity reduction of MIMO codes in general, we impose further properties that the algebras and the constructions should satisfy in order to reduce the complexity. Our study reveals a trade-off between the coding gain and decoding complexity. Related work has been carried out by, among others,  Rajan \emph{et al.} (see e.g. \cite{RR}). They considered fast-decodable distributed space-time codes arising from Clifford algebras. Our work differs from theirs in that our codes achieve the NVD property and hence the asymmetric DMT.  The codes proposed in this paper moreover have a nice algebraic structure which makes analyzing the codes easier.
  
List of contributions:
\begin{itemize}
\item Explicit fast-decodable space-time relay codes are proposed. 
\item All the codes have full diversity, some even NVD. To the best of the authors' knowledge, these are the first fast-decodable distributed space-time codes with NVD. 
\item  The constructions arise from quaternion or other type of algebras making it easy to determine the coding gain, complexity and other properties of the codes. 
\item Although our explicit examples are for the case when the source and the relays each have only one antenna, the  constructions are easily generalizable to other numbers of antennas and relays as well. 
\item We demonstrate a performance-complexity tradeoff resulting from the used method.
\item Finally, we analyze the worst-case decoding complexity of the proposed codes. 
\end{itemize}

 Let us finish this introductory section by giving a couple essential definitions. 
 
 \begin{definition}
 If the code $\Cc$ consisting of matrices $X$ satisfies
 $$\min_{0\neq X\in \Cc}\det(X^\dag X)>\kappa>0,$$
 we say that $\Cc$ has the \emph{nonvanishing determinant property} (NVD).
 
 In case of square matrices, we simply refer to $\det(X)$ when talking about NVD.
  \end{definition}
 
 There are multiple definitions of rate, but we will consistently use the following. 
 
 \begin{definition}
 Let $B_1,\ldots,B_k\in M_{n_t\times T}(\C)$ be the generator matrices (over $\R$) of a rank $k$ code $\Cc$, so
 $$
 \Cc=\sum_{i=1}^k B_ig_i,
 $$
 where $g_i\in\Z$, \emph{e.g.} PAM symbols.
 The \emph{rate} $R$ of the code is then 
 $$
 R=k/T
 $$
(real) dimensions per channel use (dpcu).
  \end{definition}
 
 Note that the commonly used rate in complex dimensions per channel use is $R/2$ when using the above notation.
 
  \section{System model for the NAF relay channel}
  
  For ease of notation, we only define the single-relay model, the generalization to multi-hop is straightforward.  Following \cite{belfi_naf}, let us denote by $X_i$ the signals transmitted from the source, and by $Y_r$ the signal received by the relay which the relay then amplifies and forwards as $BY_r$. The number of relays and the number of antennas at the source, relays and destination are denoted by $N,n_s,n_r,n_d$, respectively. We assume $n_r$ is the same for all relays $r=1,\ldots, N$. To be realistic, we assume $n_s\geq n_r$. The destination is observing $Y_1$ and $Y_2$ in consecutive time instances, and we have
  
  \begin{eqnarray*}
  Y_1&=&\sqrt{\pi_1\,SNR}\,FX_1+V_1\\
    Y_r&=&\sqrt{\pi_1\rho\, SNR}\,HX_1+W_1\\
      Y_2&=&\sqrt{\pi_3\,SNR}\,G(BY_r) +\sqrt{\pi_2\,SNR}\,FX_2+V_2,\\
      \end{eqnarray*}
where $V_i,\,W$ are the additive white gaussian noise matrices and $F,\, H,\, G$ are the Rayleigh distributed channel matrices. The power allocation $\pi_i$ factors are chosen so that $SNR$ denotes the received SNR per receive antenna at the destination. We assume perfect channel state information (CSI) at the receivers, while the transmitters have none.  For more details, we refer to \cite{belfi_naf}.

%
%

\section{On fast decodability}

Maximum-likelihood decoding amounts to searching the code $\Cc$ for the codeword
\begin{equation}
\label{frob-min}
Z = argmin\{||Y-HX||_F^2\}_{X \in \Cc},
\end{equation} closest to the received matrix $Y$ with respect to the squared Frobenius norm. 

Consider a code $\Cc$ of $\Q$-rank $k$, \emph{i.e.}, each codeword $X$ is a linear combination $\sum_{i=1}^kB_ig_i$ of generating matrices $B_1, \ldots, B_{k}$, weighted by coefficients $g_1,\dots,g_{k}$, which are PAM information symbols. The matrices $B_1, \ldots, B_{k}$ therefore define our code.
Each $n_r\times T$ matrix $HB_i$ corresponds, via vectorization, to a vector $\bv_i \in \R^{2Tn_r}$ obtained by stacking the columns followed by separating the real and imaginary parts of $HB_i$. We define the (generating) matrix 
\[
B=(\bv_1, \bv_2,\dots, \bv_{k}) \in M_{{2 T n_r}\times k}(\mathbb R),
\] so every received codeword can be represented as a real vector $B\gv$, with $\gv=(g_1,\dots,g_{k})^T$ having coefficients in the real alphabet $S$ in use. 

Now finding 
$argmin\{||Y-HX||_F^2\}_{X \in \Cc}$ 
becomes equivalent to finding $argmin\{|| \yv -B\gv ||_E^2\}_{\gv \in |S|^{k}}$ with respect to Euclidean norm, where $\yv$ is the vectorization of the received matrix $Y$.
The latter search is performed using a real sphere decoder \cite{VB}, with the complexity of exhaustive search amounting to $|S|^{k}$, as the coefficients of $\gv$ run over all the values of $S$. The complexity of decoding can, however, be reduced if the code has additional structure \cite{JR}. Performing a QR decomposition of $B$, $B=QR$, with $Q^\dagger Q=I$, reduces finding $argmin\{|| \yv -B\gv ||_E^2\}_{\gv}$ to minimizing 
\begin{equation}\label{eq:dR}
|| \yv-QR\gv||_E^2=||Q^\dagger\yv-R\gv||_E^2,
\end{equation}
where $R$ is an upper right triangular matrix. 
The number and structure of zeros of the matrix $R$ may improve the {\emph{decoding complexity}} (formally defined \cite{BHV} to be the minimum number of vectors $\gv$ over which the difference in (\ref{eq:dR}) must be computed). When the structure of the code allows for the degree (\emph{i.e.}, the exponent of $|S|$) of decoding complexity to be less than the rank of the code, we say that the code is {\emph{fast-decodable}}.

More precisely, we have the following definitions from \cite{JR}.  

\begin{definition} \label{FDdef}A space-time code is said to be {\emph {fast-decodable}} if its $R$ matrix has the following form:
$$R = {\bm {\Delta} {B_1} 0 {R_2}},$$ \label{eq:Rcondition}
where $\Delta$ is a diagonal matrix and $R_2$ is upper-triangular. 
\end{definition}

The authors of \cite{JR} give criteria when the zero structure of $R$ coincides with that of $M$, where $M$ is a matrix capturing information about orthogonality relations of the basis elements of $B_i$:

\begin{equation}M_{k,l}= ||B_k^\dagger B_l+B_l^\dagger B_k||_F. \label{orthRelations}\end{equation}

In particular, \cite[Lemma 2]{JR} shows that if $M$ has the structure 
$M = \bm {\Delta} {B_1} {B_2} {B_3}$, where $\Delta$ is diagonal, then $R = \bm {\Delta} {B_1} {0} {R_1}$. We could thus rephrase Defnition \ref{FDdef} in terms of $M$.

Next we recall the class of codes which allows groups of symbols to be decoded independently of one another.  

\begin{definition}
A space-time code of dimension $K$ is called {\emph{g-group decodable}} if there exists a partition of $\{1, \ldots, K\}$ into $g$ nonempty subsets $\G_1, \ldots, \G_g$, so that the matrix $M_{l,k}=0$ when $l, k$ are in disjoint subsets $\G_i , \G_j$. \end{definition}

In this case, as shown in \cite{JR}, the matrix $R$ has the form 
$R = \begin{bmatrix}{R_1} & {0} & {0}\\
{0} & {\cdots} & {0}\\
{0} & {0} & {R_g}\end{bmatrix}$ \label{eq:RconditionGD}

where each $R_i$ is a square upper triangular matrix. Hence, the symbols $x_k$ and $x_l$ can be decoded independently when their corresponding basis matrices $B_k$ and $B_l$ belong to disjoint subsets of the partition.

\begin{remark}Note that a simple computation shows that the zero structure of $M$ is stable under premultiplication of $B_i $ by a channel matrix $H$. In general, the same does not hold for $R$.
\end{remark}

By the above discussion, in order to demonstrate fast-decodability (resp. g-group decodability), it suffices to find an ordering on the basis elements $B_i$, which results in the desired zero structure of $M$. We proceed to do that for the proposed relay codes. 


\section{Minimum delay codes for $N=2$ and $N=3$ relays}
We demonstrate that the codes we obtain are conditionally $4$-group decodable. Recall from \cite{NR} that a code is called \emph{conditionally $g$-group decodable} if there exists a partition of the indices $\{1,\ldots,K\}$ of basis elements into $g+1$ disjoint subsets $\G_1$, \ldots, $\G_g$, $\G^C$ such that

$$\|B_l^\dag B_m+B_m^\dag B_l \|_F =0 \quad \forall l \in \G_i, \forall m \in \G_j, i \neq j.$$

In this case, the sphere decoding complexity order reduces to $|S|^{{|\G^C|}+\max_{1 \leq i \leq g} {|\G_i|}}$.

\subsection{Virtual $6\x2 $ MIMO channel with  $N=3, n_s=n_r=1, \ n_d=2$}
We proceed to show the rate four (4 PAM symbols per channel use) relay construction consisting of $6\x6$ matrices. 
\begin{proposition}
Define the code $$\Cc = \left\{\alpha_\tau(X)\right\}=\left\{\alphatau X\right\} $$ where $X$ is a matrix of the form  $$X =\left(\begin{matrix} c &  -\sqrt{11}\sigma(d)\\ \sqrt{11}d & {\sigma(c)}\end{matrix}\right)$$ with 
$c, d \in \Z(i, \zeta_7)$, $\sigma: i\mapsto -i$.

Then the code $\Cc$ is of rank 24 and (real) decoding complexity $|S|^{15}$, and has the NVD property. 
\end{proposition}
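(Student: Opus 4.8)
\emph{Strategy and rank.} The plan is to realise $\Cc$ as the image, under the block-diagonal map $\alpha_\tau$, of an order in a cyclic quaternion division algebra, and then to treat the three claims in turn. Fix $K=\Q(i,\zeta_7)$, $F=\Q(\zeta_7)=K^{\langle\sigma\rangle}$, let $\tau$ be a generator of the unique order-$3$ subgroup of $\Gal(F/\Q)\cong\Z/6$, extended to $K$ by $\tau(i)=i$, and set $F_0=K^{\langle\sigma,\tau\rangle}$, which is the quadratic subfield $\Q(\sqrt{-7})$ of $F$. One checks that $X$ is a $\diag(1,\sqrt{11})$-conjugate of a matrix representation of $c+\mathbf j\,d$ in the quaternion algebra $\mathcal Q=(K/F,\sigma,-11)$, so $\det X=N_{K/F}(c)+11\,N_{K/F}(d)=\mathrm{Nrd}_{\mathcal Q}(c+\mathbf j\,d)\in\OO_F$ and $\alpha_\tau(X)=\diag(X,\tau(X),\tau^{2}(X))$. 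Since $X\mapsto\alpha_\tau(X)$ recovers $X$ from its leading $2\times2$ block and $(c,d)$ runs over the rank-$24$ $\Z$-module $\OO_K^{2}$, the code $\Cc$ is a lattice of rank $24$, matching the rate $R=24/6=4$.

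\emph{NVD.} A codeword determinant is $\det\alpha_\tau(X)=\prod_{\ell=0}^{2}\tau^{\ell}(\det X)=N_{F/F_0}(\det X)\in\OO_{F_0}$, so it suffices that $\det X\neq0$ for $X\neq0$, equivalently that $\mathcal Q$ is division, equivalently that $-11\notin N_{K/F}(K^{\times})$. I would check this locally at a prime $\p\mid11$ of $F$: the order of $11$ in $(\Z/7)^{\times}$ is $3$, so $11$ is unramified in $F$ with residue field $\F_{11^{3}}$ and $v_{\p}(-11)=1$; and since $11^{3}\equiv3\pmod4$, $-1$ is a non-square in $\F_{11^{3}}$, so $K_{\p}/F_{\p}=F_{\p}(i)$ is the unramified quadratic extension, whose norm group is the set of elements of even valuation. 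As $v_{\p}(-11)$ is odd, $-11$ is not a local norm at $\p$, hence not a global norm, so $\mathcal Q$ is division. Consequently $\det\alpha_\tau(X)$ is a non-zero algebraic integer in the imaginary quadratic field $F_0$, whence
\[
\det\bigl(\alpha_\tau(X)^{\dag}\alpha_\tau(X)\bigr)=\bigl|\det\alpha_\tau(X)\bigr|^{2}=\bigl|N_{F_0/\Q}(\det\alpha_\tau(X))\bigr|\geq1,
\]
which gives the NVD property.

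\emph{Fast decodability.} Writing $c=c_1+ic_2$, $d=d_1+id_2$ with $c_j,d_j\in\OO_F$, we have $X=c_1E_1+c_2E_2+d_1E_3+d_2E_4$, where $E_1=I_2$, $E_2=\diag(i,-i)$, $E_3=\sqrt{11}\bigl(\begin{smallmatrix}0&-1\\1&0\end{smallmatrix}\bigr)$, $E_4=i\sqrt{11}\bigl(\begin{smallmatrix}0&1\\1&0\end{smallmatrix}\bigr)$ satisfy $E_r^{\dag}=-E_r$ for $r\geq2$, $E_rE_s=-E_sE_r$ for $r\neq s$ in $\{2,3,4\}$, and $E_r^{\dag}E_r\in\R I_2$. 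I would take a $\Z$-basis $\{b_1,b_2,b_3,\omega b_1,\omega b_2,\omega b_3\}$ of a finite-index suborder of $\OO_F$ adapted to $F=F^{+}\oplus\omega F^{+}$, where $F^{+}=\Q(\zeta_7+\zeta_7^{-1})$ is the maximal real subfield and $\omega=\zeta_7-\zeta_7^{-1}$ is purely imaginary with $\omega^{2}\in F^{+}$; the $24$ generating matrices are then $B_{r,a}=\alpha_\tau(aE_r)$. Since the $B_{r,a}$ are block-diagonal, each of the three blocks of $B_{r,a}^{\dag}B_{s,a'}+B_{s,a'}^{\dag}B_{r,a}$ is a fixed $2\times2$ matrix (depending only on the pair $(r,s)$) times a scalar proportional to $\mathrm{Im}\,\tau^{\ell}(\overline a a')$ when $r\neq s$ and to $\mathrm{Re}\,\tau^{\ell}(\overline a a')$ when $r=s$, where $\overline a$ is complex conjugation, acting on $F$ by $\zeta_7\mapsto\zeta_7^{-1}$. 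Hence the corresponding entry of the matrix $M$ of \eqref{orthRelations} vanishes iff $\overline a a'$ is real (for $r\neq s$), resp.\ purely imaginary (for $r=s$), equivalently iff $a,a'$ are both in $F^{+}$ or both in $\omega F^{+}$ (for $r\neq s$), resp.\ one in each (for $r=s$). Indexing the $24$ symbols by a slot $r\in\{1,2,3,4\}$ and a part $p\in\{+,\omega\}$, the only nonzero off-block interactions in $M$ then occur between $(r,+)$ and $(s,\omega)$ with $r\neq s$; so taking $\G^{C}$ to be the $12$ symbols of $\omega$-type and $\G_r$, $r=1,\dots,4$, to be the $3$ symbols of type $(r,+)$ gives a conditionally $4$-group decodable partition. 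By \cite[Lemma~2]{JR} and the Remark (left multiplication by the composite relay channel preserves the zero pattern of $M$), the $R$-matrix inherits this structure, and the worst-case complexity order is $|S|^{|\G^{C}|+\max_r|\G_r|}=|S|^{12+3}=|S|^{15}$.

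\emph{Main obstacle.} Once the algebra is named, rank and NVD are routine; the substance is the complexity analysis. The hard part will be choosing the $\Z$-basis of $\OO_F$ compatibly with $F=F^{+}\oplus\omega F^{+}$ and verifying that \emph{every} one of the $24\times24$ entries of $M$ lands in the claimed pattern, and, for exactness of the exponent, arguing that the dense $3\times3$ sub-blocks of $M$ are interconnected rigidly enough that no reordering admits a finer conditional-group partition with exponent below $15$.
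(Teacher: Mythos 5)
Your proposal is correct and follows essentially the same route as the paper: the same quaternion algebra $(-1,-11)$ over $\Q(\zeta_7)$ shown to be division via the prime above $11$ with residue field $\F_{11^3}$ in which $-1$ is not a square, the same Galois-descent argument placing $\det\alpha_\tau(X)$ in $\OO_{\Q(\sqrt{-7})}$ for NVD, and the same conditional $4$-group partition (condition on the $12$ ``imaginary-part'' symbols, then decode four groups of $3$) giving $|S|^{12+3}=|S|^{15}$. The only cosmetic differences are that you use $\omega=\zeta_7-\zeta_7^{-1}$ where the paper uses $\sqrt{-7}$ to split off the imaginary part of $\Q(\zeta_7)$, and that you spell out the local norm computation that the paper delegates to \cite[Theorem 7.1]{UM}; your closing worry about proving the exponent $15$ is \emph{optimal} is not something the paper addresses or the proposition requires.
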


\begin{IEEEproof}
Define $K = \Q(\sqrt{-7})$, $K' = \Q(\zeta_7)$.
Let $\zeta$ denote $\zeta_7$, then $K' = K(\zeta+\zeta^{-1})$. 

\[
\begin{diagram}
\node{} \node{Q = \qa {-11}{-1}{K'}}\\
\node{} \node{\Q(i, \zeta_7) = L'}\arrow{n,r}{2}\\
\node{\Q(i)}\arrow{ne,r}{6} \node{\Q(\zeta_7)=K'}\arrow{n,r}{2}\\
\node{\Q}\arrow{n,r}{2}\arrow{ne,r}{6}\\
\end{diagram}
\]

First we note that the algebra $\qa {-11} {-1} {\Q(\zeta_7)}$ is division. This follows from techniques of \cite{UM}: we apply \cite[Theorem 7.1]{UM} while noting that $\F_{11^3}$ contains no element of order $4$, \emph{i.e.},$-1$ is not a square in $\F_{11^3}$, which is the residue field of the prime $11$ in $\Q(\zeta_7)$. 

Now note that $\qa {-11}{-1}{K'} \cong \qa{-1}{-11}{K'}$, so let $Q = \qa{-1}{-11}{K'}$. 
After conjugation that does not affect the determinant but does aid energy balance and decoding complexity, it has the following $K'$-basis:
$$\left\{q_1 = \bm 1 0 0 1, q_2 = \bm {i} 0  0 {-i} ,\right. $$ 
$$\left. q_3 = \bm 0 {\sqrt{11} i} {\sqrt{11} i} 0,   q_4 = \bm 0 {-\sqrt{11}} {\sqrt{11}} 0 \right\}.$$
This means that $Q$ is generated over $K=\Q(\sqrt{-7})$ by the following 12 matrices
$$\Gamma_{i,1}=q_i, \Gamma_{i,2}=q_i(\zeta+\zeta^{-1}), \Gamma_{i,3}=q_i(\zeta^2+\zeta^{-2})$$ for $i = 1, \ldots, 4.$ 
Extend this to a $\Q$-basis by letting $\Gamma_{i,j}= \sqrt{-7}\Gamma_{i-4, j}$ for $i = 5, \ldots, 8$. Then a $\Z$-basis of $\Cc$ can be given by 

\begin{equation} \{ \alpha_\tau(\Gamma_{i,j}) \}_{i\leq 8,j\leq 3} \label{Zbasis6x6}\end{equation}
and is of size 24. Indeed, the rank of $\Cc$ is 24, since each codeword $X$ is an element of quaternion algebra $\qa {-1}{-11}{K'}$, and hence encodes 4 symbols from $K' = \Q(\zeta_7)$, or equivalently 24 real symbols.

Now let $\tau: \zeta_7 \mapsto \zeta_7^2$ be a generator of $Gal(K'/K)$.  When the coefficients of codewords are algebraic integers, the code is NVD. This follows from the fact that the determinant of each codeword is fixed by both $\tau$ and $\sigma$, hence it is an element of $\Q(\sqrt{-7})=K$. Moreover it is nonzero whenever $X$ is nonzero, since $Q$ was shown to be division.

We show that  $\Cc$ is conditionally $4$-group decodable with complexity $|S|^{15}$; conditioned on decoding symbols corresponding to $\{\Gamma_{5,1}, \ldots, \Gamma_{8, 3}\}$, the complexity of decoding symbols corresponding to $\{\Gamma_{1,1}, \ldots, \Gamma_{4,3}\}$ is at most $|S|^3$, where $S$ is the underlying alphabet. For that, note that when $A = \Gamma_{i,j}, B  = \Gamma_{i', j'} $, for all $j,j'$ and for $i \not = i'$, we have
$$AB^\dagger + BA^\dagger = {\bf 0}.$$

Same follows for $\alpha_\tau(A), \alpha_\tau(B)$, \emph{i.e.}, we have: 
$$\alpha_\tau(A)\alpha_\tau(B)^\dagger + \alpha_\tau(B)\alpha_\tau(A)^\dagger = {\bf 0}.$$

Let $\Gamma = [\alpha_\tau(\Gamma_{1,1}), \ldots , \alpha_\tau(\Gamma_{8,3})]$ be the list of $24$ generators of $\Cc$ from (\ref{Zbasis6x6}) in lexicographical order. Then the matrix  $M = M_{i,j}$ from Equation (\ref{orthRelations}) capturing orthogonality relations on $\Gamma$ has the following structure:

\begin{equation}M =\begin{bmatrix}*&0&0&0&*&*&*&* \\ 0&*&0&0&*&*&*&* \\ 0&0&*&0&*&*&*&* \\ 0&0&0&*&*&*&*&* \\ *&*&*&*&*&*&*&* \\ *&*&*&*&*&*&*&* \\ *&*&*&*&*&*&*&* \\ *&*&*&*&*&*&*&* \\ \end{bmatrix} \label{eq:M}\end{equation}
where each coefficient of the matrix above is a $3\x3$ matrix, which is $\mathbf{0}$ when the coefficient is $0$. 

\end{IEEEproof}

\subsection{Virtual $4\times 2$ MIMO channel with $N=2, n_s=n_r=1$, $n_d=2$} 
We use a similar idea as in the $6\x 6$ case to construct fast-decodable rate four relay  codes consisting of  $4\x 4$ matrices. 
\begin{proposition}
Let the code $\Cc$ consist of matrices $$\left\{\bm X 0 0 {\tau(X)} \bigg|\, X = {\footnotesize{\bm c  {-\sqrt{2} \sigma(d)} {\sqrt{2}d} {\sigma(c)}}}  \right\}$$ with 
$c, d \in \OO_K$, $\sigma: \sqrt{a}\mapsto -\sqrt{a}$. Again we have conjugated the original matrix in order to aid decodability and energy balance. The minimum determinant is invariant under such conjugation.

Each matrix $X$ represents an element from the quaternion algebra $\qa a \gamma K$ over a biquadratic field $K = \Q(i, \sqrt{p})$. 
Then for values $a=5, \gamma = -2, p =31 $ the resulting code is a fully diverse NVD code of rank 16. It  is conditionally $4$-group decodable with (real) decoding complexity $|S|^{10}$. 
\end{proposition}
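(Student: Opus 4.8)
The plan is to follow the template established in the $6\times6$ case, adapted to the biquadratic base field $K=\Q(i,\sqrt{31})$ and the $4\times4$ (rather than $6\times6$) outer structure. First I would verify that the quaternion algebra $Q=\qa{5}{-2}{K}$ is division. As in the previous proposition, this should reduce to a local computation at a suitable prime: one picks a prime of $K$ (most naturally lying over $2$, since $\gamma=-2$, or over $5$) and checks that the Hilbert symbol $\qa{5}{-2}{K_\fp}$ is nontrivial, e.g.\ by showing $5$ is not a norm from the relevant quadratic extension of the completion, or equivalently by the residue-field criterion of \cite[Theorem 7.1]{UM} used earlier. The specific value $p=31$ is presumably chosen precisely so that this local obstruction survives in $K=\Q(i,\sqrt{31})$; I would check the splitting behavior of $2$ and $5$ in $\Q(\sqrt{31})$ and in $\Q(i,\sqrt{31})$ to confirm the residue field contains no square root of $5$ (or the analogous statement).

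Second, I would exhibit the rank and the $\Z$-basis. Each codeword $X$ lies in $Q=\qa{5}{-2}{K}$, a $4$-dimensional $K$-algebra, and $K=\Q(i,\sqrt{31})$ has degree $4$ over $\Q$, so $X$ encodes $16$ real symbols; the outer block $\diag(X,\tau(X))$ with $\tau$ generating $\Gal(K/\Q(\sqrt p))$ (or whichever quadratic subfield is fixed by $\sigma$) does not change this count, giving rank $16$. Concretely I would write down the conjugated quaternion basis $q_1,\dots,q_4$ analogous to the $6\times6$ case — with $\sqrt{2}$ replacing $\sqrt{11}$ — and then a $\Z$-basis of $\OO_K$ of size $4$, tensoring to get $16$ generators $\Gamma_{i,j}$.

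Third, NVD: exactly as before, $\det\diag(X,\tau(X)) = \det(X)\tau(\det X) = N_{K/\Q(\sqrt p)}(\det X)$ lies in $\Q(\sqrt p)$, and since $\sigma$ (acting within $X$ via $c\mapsto\sigma(c)$, $d\mapsto\sigma(d)$ on the $\qa a\gamma{}$ structure) fixes the reduced norm, the determinant actually lands in $\Q$ — hence is a nonzero rational integer, bounded away from $0$, whenever the algebra-integer coefficients are nonzero and $Q$ is division. Fourth, the fast-decodability claim: I would show $AB^\dagger+BA^\dagger=\mathbf 0$ for basis matrices $A=\Gamma_{i,j}$, $B=\Gamma_{i',j'}$ with $i\neq i'$ (and the same after applying the outer $\tau$-block map), so that, ordering the $16$ generators so that the "conditioning" set $\G^C$ consists of the $\sqrt p$-scaled copies, the matrix $M$ from \eqref{orthRelations} has a block structure with a diagonal $4\times4$ top-left corner (entries being $2\times2$ blocks) and full last rows/columns; by the discussion following Definition \ref{FDdef} and the conditional $g$-group decodability definition from \cite{NR}, this yields complexity order $|S|^{|\G^C|+\max_i|\G_i|} = |S|^{8+2}=|S|^{10}$.

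The main obstacle I expect is the division-algebra verification: getting the local symbol at the right prime correct in the biquadratic field $K=\Q(i,\sqrt{31})$, where one must be careful about how $2$ (the "$\gamma$" prime) and $5$ split or ramify, since biquadratic fields have three quadratic subfields and the relevant completion could be larger than $\Q_2$. Everything else — the rank count, the NVD Galois-descent argument, and the orthogonality relations $AB^\dagger+BA^\dagger=\mathbf 0$ — should be a direct, if slightly tedious, transcription of the $6\times6$ argument with $11\rightsquigarrow 2$ and the cubic cyclic data replaced by the quadratic data of $K/\Q$.
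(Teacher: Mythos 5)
Your outline follows the paper's proof step for step: the residue-field criterion of \cite[Theorem 7.1]{UM} for the division property, the rank count $4\times 4=16$, NVD by Galois descent of $\det(X)\tau(\det X)$, and conditional $4$-group decodability from orthogonality relations $A^\dagger B+B^\dagger A=\mathbf 0$, with the count $|\G^C|+\max_i|\G_i|=8+2=10$. However, two of your concrete choices would not go through as stated. On the division check: a prime over $2$ is a dead end for the residue-field criterion, since the residue field there has characteristic $2$, where every element is a square (and $2$ ramifies in $\Q(i)$, so the completion is not even $\Q_2$). The paper's route is your second option: $5\equiv 1\pmod 4$ and $31\equiv 1\pmod 5$, so $5$ splits completely in $K=\Q(i,\sqrt{31})$ with residue field $\F_5$, and the element that must be a non-square there is $\gamma=-2\equiv 3$, not $5$.

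On the conditioning set: you place the $\sqrt p$-scaled generators into $\G^C$ and group the remaining ones as $\{q_i,\, iq_i\}$. This is not supported by the relations you propose to verify. Scaling by the real number $\sqrt{31}$ preserves the relation $A^\dagger B+B^\dagger A=\mathbf 0$, whereas replacing $A$ by $iA$ turns the left-hand side into $-i(A^\dagger B-B^\dagger A)$, which is generically nonzero even when $A\perp B$; so the pairs that fail to be orthogonal are exactly those involving a single $i$-multiple, not those involving $\sqrt p$. The paper accordingly takes the groups to be $\G_i=\{\Gamma_{i,1},\Gamma_{i,2}\}=\{q_i,\,q_i\sqrt p\}$, verifies $\Gamma_{i,j}^\dagger\Gamma_{i',j'}+\Gamma_{i',j'}^\dagger\Gamma_{i,j}=\mathbf 0$ for $i\neq i'$ (no $i$-multiples involved), and conditions on the eight generators $\{\alpha_\tau(i\Gamma_{1,1}),\ldots,\alpha_\tau(i\Gamma_{4,2})\}$. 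Your grouping would instead require the different relations $q_i\perp i q_{i'}$ for $i\neq i'$, which you neither state nor verify and which do not follow from the ones you do state; with your ordering the matrix $M$ would not have the claimed zero structure. The final exponent $8+2=10$ is unaffected, but the identification of which eight symbols are conditioned on is the substance of the fast-decodability claim, so this needs to be corrected.
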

\begin{IEEEproof}
We use similar techniques to the previous proof to establish that $\qa a \gamma K$ is division. First we establish that the prime ideal $5\Z$ splits completely in $K$. Using the fact that its residue field is isomorphic to $\F_5$, we conclude that $\gamma$ is not a square modulo $5$ in the integers of $K$. Hence using \cite[Theorem 7.1]{UM}, we conclude that $Q = \qa a \gamma K$ is division. 
The generators of $Q$ over $\Q(i)$ are 

$$\Gamma_{i,1}=q_i, \Gamma_{i,2}=q_i\sqrt{p} \mid i = 1, \ldots, 4.$$

We use the image of these generators under $\alpha_\tau$ to generate the code $\Cc$ over $\Z$. To check fast decodability, we verify the relations
$$\alpha_\tau(X)^\dagger \alpha_\tau(Y) + 
\alpha_\tau(Y)^\dagger \alpha_\tau(X) = {\bf{0}}, $$
$X =\Gamma_{i,j}, Y = \Gamma_{i',j'}$, where $1\leq i\not = i' \leq 4$ and $1\leq j, j' \leq 3$. 

The ordering $\alpha_\tau(\Gamma_{1,1}), \ldots, \alpha_\tau(i\Gamma_{1,1}) , \ldots, \alpha_\tau(i\Gamma_{4,2})$ gives the matrix $M$ with the same zero structure as in Equation (\ref{eq:M}) only now each coefficient is a $2\times 2$ matrix. Hence the code is conditionally $4$-group decodable with complexity $|S|^{10}$. More precisely, conditioned on decoding symbols corresponding to  $\{\alpha_\tau(i\Gamma_{1,1}), \ldots, \alpha_\tau(i\Gamma_{4,2}) \}$, the complexity of  decoding the rest of the symbols is at most $|S|^2$. 


\end{IEEEproof}

\begin{remark} Here we have concentrated on a real sphere decoding process. 
Note, however, that $\Gamma_{1,1}, \Gamma_{1,2}, \ldots , \Gamma_{4,1}, \Gamma_{4,2}$ gives a $\Z(i)$-basis of $\Cc$. We can verify the relations 
$$\alpha_\tau(X)^\dagger \alpha_\tau(Y) + 
\alpha_\tau(Y)^\dagger \alpha_\tau(X) = 0, $$
$X = \Gamma_{i,j}, Y = \Gamma_{i',j'}$, where $1\leq i\not = i'\leq 4$. 
Hence the obtained code is $4$-group decodable, of complexity degree $2$ when using a complex decoder. 
\end{remark}

\section{Less than minimum delay codes for $N=2$ relays}

In order to further reduce the complexity, we will shorten the block length. Such \emph{less than minimum delay} codes have been considered in \cite{lessthanmin}. To this end, let us start by constructing a code with dimension rate $R=2$, i.e., the lattice is of rank 4 and the code matrix transmits two real dimensions (e.g. two PAM symbols) per channel use. 

\subsection{Virtual $4\x1 $ MIMO channel with  $N=2, n_s=n_r=1, \ n_d=1$}
Let us consider the 8th cyclotomic extension $\Q(\zeta_8)/\Q$, and denote its Galois group by $\{1,\tau:\sqrt{2}\mapsto -\sqrt{2},^*,\tau ^*\}$, where $^*$ denotes complex conjugation. Our code will simply consist of matrices 
$$
X(a_1,a_2)=\left(\begin{array}{cc}
x & 0\\
x^* &0\\
0&\tau(x)\\
0&\tau(x)^*\\
\end{array}\right),
$$
where $x=a_1+a_2\zeta_8$ and $a_i\in\Z[i]$. Due to the fact that the code matrix only contains two QAM symbols, the complexity will automatically be at most $|S|^4$, where $S$ is the underlying real alphabet. However, if we do a smart ordering of the basis elements as 
$
\{B_1=X(1,0),X(i,0),X(0,1),X(0,i)\},
$
the $4\times 4$ matrix $R$ will have the form described in Definition \ref{FDdef} with a $2\times 2$ matrix $\Delta$. This is due to the fact that whenever we have a totally real basis element $B_i$ and a totally imaginary one $B_j$, this will result in $\Re\tr[(HB_i)^\dag HB_j]=0$. Thus, we have reduced the worst case complexity to $|S|^3$, that is, by 62.5\% compared to the complexity $|S|^8$ of a general  square code with the same rate.

\begin{proposition}
The matrices $X^\dag X,\, X\neq 0,$ have the NVD property.
\end{proposition}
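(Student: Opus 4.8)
The plan is to compute $X^{\dagger}X$ by hand, identify its determinant with a relative norm of a nonzero algebraic integer, and conclude that this is a nonzero rational integer and hence bounded away from $0$.

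First I would note that the two columns of $X(a_1,a_2)$, namely $(x,x^{*},0,0)^{T}$ and $(0,0,\tau(x),\tau(x)^{*})^{T}$, have disjoint support. Hence $X^{\dagger}X$ is the $2\times 2$ \emph{diagonal} matrix with diagonal entries $|x|^{2}+|x^{*}|^{2}=2|x|^{2}$ and $|\tau(x)|^{2}+|\tau(x)^{*}|^{2}=2|\tau(x)|^{2}$, using that complex conjugation preserves absolute value. Therefore
$$\det\!\left(X^{\dagger}X\right)=4\,|x|^{2}\,|\tau(x)|^{2}.$$
Also $X\neq 0$ exactly when $x\neq 0$: the automorphism $\tau$ does not kill $x$, and $\{1,\zeta_8\}$ is linearly independent over $\Q(i)$.

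Next I would pass to algebra. Since $a_1,a_2\in\Z[i]\subset\Z[\zeta_8]$ and $i=\zeta_8^{2}$, the element $x=a_1+a_2\zeta_8$ lies in $\OO_{\Q(\zeta_8)}=\Z[\zeta_8]$. Put $m:=|x|^{2}=x\,x^{*}$. The fixed field of complex conjugation $^{*}$ is the maximal real subfield $\Q(\sqrt2)$, whose ring of integers is $\Z[\sqrt2]$, so $m\in\Z[\sqrt2]$ and $m\neq 0$. Because the Galois group $\{1,\tau,{}^{*},\tau^{*}\}$ is abelian, $\tau$ commutes with complex conjugation, so
$$|\tau(x)|^{2}=\tau(x)\,\overline{\tau(x)}=\tau(x)\,\tau(\bar x)=\tau(x\bar x)=\tau(m).$$
Hence $\det(X^{\dagger}X)=4\,m\,\tau(m)$, and since $\tau$ restricts on $\Q(\sqrt2)$ to the nontrivial automorphism $\sqrt2\mapsto-\sqrt2$, the product $m\,\tau(m)$ is precisely $N_{\Q(\sqrt2)/\Q}(m)$.

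Finally I would invoke the standard fact that the norm of a nonzero algebraic integer is a nonzero rational integer: $m\in\Z[\sqrt2]\setminus\{0\}$ forces $N_{\Q(\sqrt2)/\Q}(m)\in\Z\setminus\{0\}$. Moreover $m=|x|^{2}>0$ and $\tau(m)=|\tau(x)|^{2}>0$, so $N_{\Q(\sqrt2)/\Q}(m)\geq 1$, and thus $\det(X^{\dagger}X)=4\,N_{\Q(\sqrt2)/\Q}(m)\geq 4$ for every $X\neq 0$. This gives the NVD property with constant $\kappa=4$ (or any smaller positive value in the definition). I do not expect a genuine obstacle here; the only points requiring care are the disjoint-support observation that forces $X^{\dagger}X$ to be diagonal, and the identity $|\tau(x)|^{2}=\tau(|x|^{2})$, which relies on $\tau$ being a field automorphism commuting with complex conjugation.
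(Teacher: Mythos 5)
Your proposal is correct and follows essentially the same route as the paper: the paper likewise observes that $\det(X^\dag X)=4x\tau(x)x^*\tau(x)^*=4N_{\Q(\zeta_8)/\Q}(x)\in 4\Z$, hence $\det(X^\dag X)\geq 4$. Your version merely spells out the diagonal structure of $X^\dag X$ and factors the norm through the intermediate field $\Q(\sqrt2)$, which the paper leaves implicit.
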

\begin{proof}
The proof is straightforward. Namely, $\det(X^\dag X)=4x\tau(x)x^*\tau(x)^*=4N_{\Q(\zeta_8)\Q}(x)\in 4\Z$, and hence $\det(X^\dag X)\geq 4$.
\end{proof}

Next, we extend the above construction to the case $n_d=2$ ideally calling for a rate four code. 

\subsection{Virtual $4\x2 $ MIMO channel with  $N=2, n_s=n_r=1, \ n_d=2$}
Let us next construct a rank 8 lattice in order to have higher multiplexing of 4 real dimensions per channel use. We start by adjoining $\sqrt{5}$ to the above extension, \emph{i.e.}, we consider $\Q(\zeta_8,\sqrt{5})/\Q$ and denote the corresponding maps fixing $\Q(i)$ by 
   $   \{1,\tau:\sqrt{2}\mapsto -\sqrt{2},r:\sqrt{5}\mapsto -\sqrt{5},\tau r\}.$
   
 The code matrix now looks like
$$
 X(a_1,a_2,a_3,a_4)=\left(\begin{array}{cc}
\nu x & 0\\
 r(\nu x) &0\\
0&\tau(\nu x)\\
0&\tau r(\nu x)\\
\end{array}\right),
$$
where $x=a_1+a_2\zeta_8+a_3\theta+a_4\zeta_8\theta$,  $\theta=\frac{1+\sqrt{5}}2$,  $a_i\in\Z[i]$, and $\nu=1+i-i\theta$ generates a principal ideal that will make the code lattice orthogonal. This field extension is the same as the one used for the extended golden algebra in \cite{belfi_naf}. The complexity of the code is at most $|S|^8$. Similarly to the $n_d=1$ case, this can be further reduced to $|S|^7$ by ordering the basis as $\{B_1=X(1,0,0,0),B_2=X(i,0,0,0),\ldots\}$. We have reduced the complexity by 56.25\% compared to a general  square code with the same rate and complexity $|S|^{16}$. 

\begin{proposition}
The matrices $X^\dag X,\, X\neq 0,$ have full rank, \emph{i.e.}, a code consisting of the matrices $X$ has full diversity.
\end{proposition}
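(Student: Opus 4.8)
The plan is to exploit the disjoint column supports of $X$, which make the $2\times 2$ Gram matrix diagonal, and then reduce full rank to the single scalar condition $x\neq 0$.

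First I would compute $X^\dag X$ directly. The first column of $X$ is supported on rows $1,2$ and the second column on rows $3,4$, so the off-diagonal entries of $X^\dag X$ vanish and
$$X^\dag X=\diag\bigl(|\nu x|^2+|r(\nu x)|^2,\ |\tau(\nu x)|^2+|\tau r(\nu x)|^2\bigr).$$
Hence $\det(X^\dag X)=\bigl(|\nu x|^2+|r(\nu x)|^2\bigr)\bigl(|\tau(\nu x)|^2+|\tau r(\nu x)|^2\bigr)$, a product of two nonnegative real numbers, and full rank of $X^\dag X$ is equivalent to both factors being nonzero.

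Next I would show each factor is strictly positive unless $x=0$. A sum of squared absolute values vanishes only when every summand does, so the first factor is $0$ iff $\nu x=0$; since $\nu=1+i-i\theta$ is a nonzero element of $\Q(\zeta_8,\sqrt{5})$ (because $1,\theta$ are linearly independent over $\Q(i)$), this forces $x=0$. Applying the same observation to the automorphism images $\tau(\nu x)$ and $\tau r(\nu x)$ shows the second factor vanishes iff $x=0$ as well. Therefore $\det(X^\dag X)>0$ whenever $x\neq 0$.

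Finally I would check that $X\neq 0$ is equivalent to $x\neq 0$, i.e. that $\{1,\zeta_8,\theta,\zeta_8\theta\}$ is linearly independent over $\Q(i)$. This follows from the tower $\Q(i)\subset\Q(\zeta_8)=\Q(i,\sqrt{2})\subset\Q(\zeta_8,\sqrt{5})$: $\{1,\zeta_8\}$ is a $\Q(i)$-basis of the middle field and $\{1,\theta\}$ is a $\Q(\zeta_8)$-basis of the top field, so their products form a $\Q(i)$-basis of $\Q(\zeta_8,\sqrt{5})$; consequently $x=a_1+a_2\zeta_8+a_3\theta+a_4\zeta_8\theta=0$ with $a_i\in\Z[i]$ forces $a_1=\cdots=a_4=0$. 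Chaining the three steps gives $X\neq 0\Rightarrow x\neq 0\Rightarrow\det(X^\dag X)>0$, so $X^\dag X$ has full rank and the code is fully diverse. There is essentially no obstacle here: the argument is elementary, and the only points deserving a line are $\nu\neq 0$ and the basis statement, both immediate from the field tower above. Note that $\nu$ is irrelevant to this particular argument — it is present only to orthogonalize the code lattice — so the same proof works unchanged if $\nu$ is set to $1$.
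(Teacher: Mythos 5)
Your proof is correct and follows essentially the same route as the paper, which simply states the identity $\det(X^\dag X)=(|\nu x|^2+|r(\nu x)|^2)(|\tau(\nu x)|^2+|\tau r(\nu x)|^2)>0$; you have merely filled in the routine justifications (diagonality of the Gram matrix, $\nu\neq 0$, and the linear independence of $\{1,\zeta_8,\theta,\zeta_8\theta\}$ over $\Q(i)$). Your closing observation that $\nu$ plays no role in this particular argument is also accurate.
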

\begin{proof}
Again, the proof is very simple. We have $\det(X^\dag X)=(|\nu x|^2+|r(\nu x)|^2)(|\tau(\nu x)|^2+|\tau r(\nu x)|^2)>0$.
\end{proof}

\begin{remark}
We want to point out that, unfortunately, there is no free lunch. Namely with the above construction method increasing the code rate causes a degradation in the coding gain. At the same time, reducing the delay from four to two channel uses will naturally lower the maximum rank (and diversity) the code matrix can achieve when compared to a square code matrix. Hence, we observe a performance-complexity tradeoff implying that while we can indeed reduce the decoding complexity by reducing the code length, we are  likely to face slightly worse performance caused by the reduction in diversity.  Increasing the rate (performance) will also here require dropping the NVD requirement, again indicating a tradeoff type behavior. 
\end{remark}
\section{Conclusions and future work}\label{conclusions}

We proposed explicit relay codes with a fast-decodable structure and NVD for different number of antennas and relays. It was shown that the method used implies a performance-complexity tradeoff. In other words, fast-decodable codes with NVD were proposed, while at the same time it was noted that relaxing on the NVD property allows for further complexity reductions. One efficient way to reduce the complexity is to employ less than minimum delay codes that by construction already halve the complexity compared to general minimum delay codes. It remains to be investigated how the codes perform compared to other distributed codes that either have higher complexity or lack NVD. Related results \cite{caminadya} will be posted to arXiv in near future.



\end{document}